\documentclass[12pt]{article}

\usepackage[utf8]{inputenc}
\usepackage[T1]{fontenc}
\usepackage[a4paper,margin=1in]{geometry}
\usepackage{amsmath,amssymb,amsthm,mathtools}
\usepackage{aliascnt}
\usepackage{enumitem}
\usepackage{booktabs}
\usepackage[dvipsnames]{xcolor}
\usepackage{url}
\usepackage{hyperref}
\usepackage{cleveref}
\usepackage{authblk}
\usepackage{microtype}
\usepackage[backend=biber,style=alphabetic,doi=false,maxbibnames=99]{biblatex}
\mathtoolsset{showmanualtags}
\numberwithin{equation}{section}
\AtBeginDocument{%
  \setlength{\abovedisplayskip}{4pt plus 1pt minus 1pt}%
  \setlength{\belowdisplayskip}{4pt plus 1pt minus 1pt}%
  \setlength{\abovedisplayshortskip}{3pt plus 1pt minus 1pt}%
  \setlength{\belowdisplayshortskip}{3pt plus 1pt minus 1pt}%
}

\hypersetup{colorlinks=true,linkcolor=orange!80!black,urlcolor=blue,citecolor=blue}
\theoremstyle{definition}
\newtheorem{definition}{Definition}[section]
\newaliascnt{proposition}{definition}
\newtheorem{proposition}[proposition]{Proposition}
\aliascntresetthe{proposition}
\newaliascnt{theorem}{definition}
\newtheorem{theorem}[theorem]{Theorem}
\aliascntresetthe{theorem}
\newaliascnt{lemma}{definition}
\newtheorem{lemma}[lemma]{Lemma}
\aliascntresetthe{lemma}
\newaliascnt{corollary}{definition}

\aliascntresetthe{corollary}

\newtheoremstyle{italicremark}
  {.5\topsep}{.5\topsep}{\normalfont}{}
  {\itshape}{.}{.5em}
  {\thmname{#1}\thmnumber{ \textit{#2}}\thmnote{ \normalfont(#3)}}

\theoremstyle{italicremark}
\newtheorem{remark}[definition]{Remark}
\newtheorem{construction}[definition]{Construction}

\theoremstyle{remark}

\crefname{definition}{Definition}{Definitions}
\crefname{proposition}{Proposition}{Propositions}
\crefname{theorem}{Theorem}{Theorems}
\crefname{lemma}{Lemma}{Lemmas}
\crefname{corollary}{Corollary}{Corollaries}
\crefname{remark}{Remark}{Remarks}
\Crefname{definition}{Definition}{Definitions}
\Crefname{proposition}{Proposition}{Propositions}
\Crefname{theorem}{Theorem}{Theorems}
\Crefname{lemma}{Lemma}{Lemmas}
\Crefname{corollary}{Corollary}{Corollaries}
\Crefname{remark}{Remark}{Remarks}

\newcommand{\C}{\mathbb C}
\newcommand{\Z}{\mathbb Z}
\let\P\relax
\newcommand{\P}{\mathbb P}
\newcommand{\Sym}{\operatorname{Sym}}

\newcommand{\wt}{\widetilde}

\newcommand{\cO}{\mathcal O}
\newcommand{\cS}{\mathcal S}
\newcommand{\pvar}[1]{p_{#1}}
\newcommand{\qvar}[1]{q_{#1}}

\title{Segre determinants and MHV gravity amplitudes}
\author[1]{Joris Koefler\textsuperscript{*}}
\author[2]{Emanuele Ventura}
\author[2]{Hongmiao Yu}
\affil[1]{\small School of Natural Sciences, Institute for Advanced Study, 1 Einstein Drive, Princeton, USA}
\affil[2]{\small Politecnico di Torino, Dipartimento di Scienze Matematiche ``G. L. Lagrange'', Corso Duca degli Abruzzi 24, Torino, Italy}

\date{}

\begin{document}
\maketitle
\begingroup
\renewcommand{\thefootnote}{*}
\footnotetext{Email addresses: \url{koefler@ias.edu}; \url{emanuele.ventura@polito.it}; \url{hongmiao.yu@polito.it}.}
\endgroup

\begin{abstract}
We describe a {\it Segre--Veronese} construction of polynomial solutions to the interpolation conditions of the numerators of tree-level MHV gravity amplitudes given by Hodges' formula.
The five- and six-point Hodges numerators arise from Segre determinants, while for every $n\geq7$ suitable Segre determinants lead to solutions linearly independent of the Hodges numerator.
The existence of such independent elements for $n=7,8$ was recently established by Mai and Zhang.
\end{abstract}

\medskip
\noindent\textbf{Keywords.} MHV gravity amplitudes; Segre determinant; Segre--Veronese embeddings; spinor-helicity varieties; invariant theory.

\section{Introduction}\label{sec:intro}

The maximally helicity violating (MHV) amplitude for gravitons in $\mathcal N=8$ supergravity admits a particularly simple expression in spinor-helicity variables due to Hodges \cite[Eq.~(5)]{Hodges2012}.  An algebraic characterization of this amplitude was proposed in \cite{Gravity}.  Its formulation uses the coordinate ring of the spinor-helicity variety introduced in \cite{SH} and asks whether the degree and pairwise interpolation conditions determine the MHV numerator uniquely.

More precisely, let $Q_n$ be the spinor-helicity coordinate ring, let $J_n$ be the intersection of the ideals generated by $p_{ij}$ and $q_{ij}$ in $Q_n$, and let
\begin{equation}\label{eq:dndef-intro}
 d(n)=\left(\frac{n^2-3n-6}{2},\ n-3;\ (n-5)^n\right)\in \mathbb Z^{n+2}
\end{equation}
be the spinor multidegree of the MHV numerator, where $(n-5)^n$ means that $(n-5)$ is repeated $n$ times.  The conjecture of \cite[Conjecture~2.1]{Gravity} asserts
\[
 \dim_{\C}(J_n)_{d(n)}=1.
\]
This is also Problem 16 of \cite{problemlist}, where Paranjape first formulated the same question.
We call the numerator stemming from Hodges' formula the \emph{Hodges numerator} $N_n$. Since this is a nonzero element of $(J_n)_{d(n)}$, the conjecture is equivalent to saying that, up to a scalar, it is the unique polynomial of spinor multidegree $d(n)$ satisfying all the interpolation conditions.
The conjecture was verified at five and six points in \cite{Gravity}.  Mai and Zhang recently showed that algebraic uniqueness fails at seven and eight points and studied the additional role of permutation symmetry and physical boundary data \cite{MaiZhang}, leading to a physical uniqueness result. Related questions concerning explicit formulas and geometric structures for gravity amplitudes appear, for example, in \cite{Gravituhedron}.

The aim of this paper is to isolate a geometric mechanism behind these spaces of solutions.
The basic object is the \emph{Segre determinant} $\Delta_{a,b}$ \cite{Segredeterminant}, which is defined as the determinant of the square \emph{Segre matrix} whose rows are homogeneous coordinate vectors of $(a+1)(b+1)$ points under the Segre--Veronese embedding
\[
 \P^1\times\P^1\hookrightarrow\P^{(a+1)(b+1)-1}
\]
defined by $|\cO_{\P^1\times\P^1}(a,b)|$.
Its {\it determinantal} geometry has two features that fit the MHV problem.  First, the Segre determinant is an invariant polynomial in the spinor variables and therefore becomes a polynomial in the variables $p_{ij},q_{ij}$.  Second, whenever two labels $i,j$ satisfy $p_{ij}=q_{ij}=0$, the corresponding two rows of the Segre matrix become dependent; consequently the determinant belongs to the ideal $(p_{ij},q_{ij})$.  The spinor multidegree can then be adjusted by multiplying by suitable monomials.

This point of view already explains the two low-multiplicity cases where uniqueness is known.  At six points, one has $d(6)=(6,3;1^6)$,
which is exactly the spinor multidegree of the Segre determinant $\Delta_{2,1}$ associated with $\cO_{\P^1\times\P^1}(2,1)$.  Since $\Delta_{2,1}$ is nonzero and satisfies all six-point boundary conditions, the uniqueness result of \cite{Gravity} implies that it is proportional to the six-point Hodges numerator.  At five points, $d(5)=(2,2;0^5)$;
this is the degree of the determinant $\Delta_{1,1}$ on any four of the five labels.  Momentum conservation makes the five $4\times4$ Segre determinants equal up to sign in $Q_5$; hence their common class up to sign satisfies every five-point boundary condition.  It is nonzero, so five-point uniqueness again identifies it with the Hodges numerator up to scale; see \Cref{rem:five-six}.

Starting at seven points, the same geometry produces additional elements rather than recovering the unique Hodges line, generated by $N_n$. Our main result gives, for every $n\geq7$, an explicit nonzero polynomial $F_n=\Delta_n P_n\in(J_n)_{d(n)}$
that is not proportional to the Hodges numerator.  For even multiplicity the Segre determinant may be supported on all labels; for odd multiplicity at least nine we use a determinant on $n-3$ labels together with a simple three-label completion; the seven-point case is a sort of one-point extension of $\Delta_{2,1}$ in which momentum conservation supplies the missing boundary conditions. Our main theorem reads as follows.

\begin{theorem}[Segre construction at all multiplicities]\label{thm:main}
For every $n\geq7$ there exists a nonzero polynomial class $F_n=\Delta_n P_n\in(J_n)_{d(n)}$
not proportional to the Hodges numerator. More precisely:
\begin{enumerate}[label=\textup{(\roman*)}]
\item for $n=7$, one may take $\Delta_n=\Delta_{2,1}$ on six labels and a one-point completion as described in \Cref{sec:seven};
\item for $n=2k\geq8$, one may take $\Delta_n=\Delta_{k-1,1}$ on all $n$ labels;
\item for $n=2k+1\geq9$, one may take $\Delta_n=\Delta_{k-2,1}$ on the first $n-3$ labels.
\end{enumerate}
In each case, $P_n\in Q_n$ is chosen suitably.
Consequently, $\dim_\C(J_n)_{d(n)}\geq2$ for every $n\geq7$.
\end{theorem}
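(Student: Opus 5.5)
The plan is to treat the three cases separately, but to run each one through the same three checks: (a) spinor multidegree, (b) membership in every ideal $(p_{ij},q_{ij})$, and (c) non-proportionality to $N_n$.

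First, the basic facts about $\Delta_{a,b}$ for $b=1$. Write $\lambda_i$ and $\tilde\lambda_i$ for the two spinors of label $i$. The Segre matrix has rows $\lambda_i^{\otimes a}\otimes\tilde\lambda_i$, which lie in $\Sym^a\C^2\otimes\C^2$, on $m=2(a+1)$ labels. It is $SL_2\times SL_2$-invariant, so by the first fundamental theorem it is a polynomial in the $p_{ij},q_{ij}$. Its spinor multidegree is $(a(a+1),\,a+1;\,1^m)$, counting $\lambda$-degree $a\cdot 2(a+1)/2$ in $p$ and degree $a+1$ in $q$. If $\lambda_i\parallel\lambda_j$ and $\tilde\lambda_i\parallel\tilde\lambda_j$, the two rows are proportional, so $\Delta_{a,1}\in(p_{ij},q_{ij})$ for labels $i,j$ in its support.

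\textbf{Case $n=2k$.} Here $\Delta_{k-1,1}$ has multidegree $(k(k-1),k;1^n)$. The target is $d(n)=(2k^2-3k-3,\,2k-3;\,(2k-5)^n)$. The deficit is $(k^2-3k-3,\,k-3;\,(2k-6)^n)$, which has label-degree sum $n(2k-6)=2\cdot\bigl(2(k^2-3k-3)\bigr)+\ldots$. I would choose $P_n$ to be a product of $p$'s and $q$'s whose label degrees match.
\begin{itemize}
\item Each $p_{ij}$ has $\lambda$-degree $1$ and contributes $1$ to labels $i$ and $j$.
\item Each $q_{ij}$ has $\tilde\lambda$-degree $1$ and also contributes $1$ to labels $i$ and $j$.
\end{itemize}
The counts then need $2(k^2-3k-3)+2(k-3)=2k(2k-6)/2\cdot$. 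Concretely, $2k^2-4k-12$ versus $2k^2-6k$, which differ by $2k-12$. I expect the paper's multidegree convention (a $\lambda$-degree counted per $p$, with momentum/Schouten conventions) to absorb this, so the first real step is to fix the bookkeeping from the paper's definition of $d(n)$. Once a regular multigraph of $p$- and $q$-edges with the right counts is found, for instance unions of perfect matchings or cycles on $n$ vertices, $F_n=\Delta_nP_n$ lies in every $(p_{ij},q_{ij})$ because $\Delta_n$ does.

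\textbf{Case $n=2k+1\ge 9$.} Now $\Delta_{k-2,1}$ only covers labels $1,\dots,n-3$. The pairs involving the last three labels must be handled by $P_n$. I would put into $P_n$ a factor such as $p_{ab}q_{ab}$-type monomials, or better a product $\prod_{i\le n-3,\,j>n-3}(\text{something in }(p_{ij},q_{ij}))$, together with $p_{n-2,n-1}$, $p_{n-1,n}$ and $p_{n-2,n}$. Taking one $p_{ij}$ for every pair with $j$ among the last three satisfies all the remaining conditions, since $p_{ij}\in(p_{ij},q_{ij})$. Then I pad with extra monomials to reach $d(n)$ and verify that the degrees are nonnegative.

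\textbf{Case $n=7$.} Take $\Delta_{2,1}$ on labels $1,\dots,6$, which has multidegree $(6,3;1^6,0)$, against $d(7)=(14,4;2^7)$. The pairs $(i,7)$ need care. I would multiply by $\prod_{i\le6}p_{i7}$ (with $\lambda$-degree $6$ and label $7$ degree $6$, which is too much), so instead use a single invariant of degree $2$ in label $7$. For this I would use momentum conservation, $\sum_i p_{7i}q_{ij}=0$, to show that a combination such as $\Delta_{2,1}\cdot p_{7a}q_{7b}$ lies in $(p_{i7},q_{i7})$ modulo the ideal. When $\lambda_7\parallel\lambda_i$ and $\tilde\lambda_7\parallel\tilde\lambda_i$, labels $7$ and $i$ merge, and the six-point data with a merged row would make the Segre matrix degenerate after substituting via momentum conservation. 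This is the step I expect to be hardest, and I would first test it numerically.

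\textbf{Non-proportionality.} In all cases I would evaluate on special kinematics. $F_n$ vanishes wherever $\Delta_n$ does, e.g. when some $k$ of the $\lambda_i$ in the support coincide in a way that forces a Segre rank drop while the pairs stay generic. I would then exhibit a point of that locus where $N_n\ne0$, using the explicit Hodges formula, or alternatively compare the $S_n$-symmetry: $N_n$ is symmetric while $F_n$ in cases (iii) and (i) visibly is not, and checking one transposition at a numerical point suffices. Nonvanishing of $F_n$ follows from $\Delta_n\ne0$ at generic points together with $Q_n$ being a domain.
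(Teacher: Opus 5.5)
Your grading is off, and this derails cases (ii) and (iii). In the paper's convention $\deg q_{ij}=e_q-e_i-e_j$, so $q$-factors \emph{subtract} from label weights. Hence $\deg\Delta_{a,1}=\bigl(a(a+1),a+1;(a-1)^m\bigr)$, not $(\dots;1^m)$. The mismatch you hope some convention will absorb is an artifact of this sign error and of the slip $k^2-3k-3$ (it should be $k^2-2k-3$); likewise $d(7)=(11,4;2^7)$.

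With the correct grading the even deficit is $\bigl((k-3)(k+1),k-3;(k-3)^n\bigr)$. It is realized, for instance, by a $(k-3)$-regular graph of $p$-edges times $k-3$ label-neutral factors $p_{ij}q_{ij}$, which is the paper's $\Psi_{k-3}$. You never write down any $P_n$.

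In the odd case you put a $p$-factor on every pair meeting $T=\{n-2,n-1,n\}$, including the three pairs inside $T$. This cannot be padded for $n=9,11,13$. After $\Delta_{k-2,1}(S)$ and these factors the remaining degree is $\bigl(k^2-4k-3,\,k-1;\,(k-4)^{2k-2},(-4)^3\bigr)$. Weight $-4$ at each $t\in T$ needs at least $12$ $q$-occurrences in $T$, but only $k-1$ $q$-factors are available; at $n=9$ even the $p$-degree is negative. The fix is the paper's $\Theta_k$, which uses $q_{tt'}$ inside $T$. The remainder is then $\bigl(k(k-4),k-4;(k-4)^{2k-2},0^3\bigr)$, which $\Psi_{k-4}(S)$ fills.

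Nonvanishing also needs more than ``generic points'': that only shows $\Delta_n\neq0$ in $R_n$. A Segre determinant can vanish in $Q_n$, e.g.\ $\Delta_{1,1}$ on four labels, whose rows sum to zero under momentum conservation. The paper exhibits a block-diagonal momentum-conserving witness for $n=2k$. For odd $n$ it completes arbitrary spinors on $S$ to a momentum-conserving configuration using the labels in $T$.

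For $n=7$ the mechanism you propose is false. On $p_{i7}=q_{i7}=0$ write $\lambda_7=c\lambda_i$ and $\widetilde\lambda_7=d\widetilde\lambda_i$. Then labels $1,\dots,6$, with $\widetilde\lambda_i$ rescaled by $1+cd$, form an arbitrary six-point momentum-conserving configuration. On it $\Delta_{2,1}$ is generically nonzero, being up to the factor $1+cd$ the six-point Hodges numerator. So $\Delta_{2,1}$ contributes nothing to the six conditions $(p_{i7},q_{i7})$. No monomial completion of degree $(5,1;1^6,2)$ can supply them either: it would need a factor $p_{i7}$ or $q_{i7}$ for each $i\le6$, forcing label-$7$ weight at least $5-1=4$. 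Your $p_{7a}q_{7b}$ even has the wrong degree. This is why the paper uses the non-monomial $\mathcal S_7^{ab}$. It proves via a Pl\"ucker relation and $M_{a7}$ that the class of $\mathcal S_7^{ab}$ in $Q_7$ is independent of $\{a,b\}$, and then picks a representative with $i\notin\{a,b\}$.

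Finally, your non-proportionality test rests on a wrong premise. $N_n$ is alternating, not symmetric; already $N_6\propto\Delta_{2,1}$ changes sign under every transposition. So showing $\sigma F_n\neq F_n$ proves nothing; you need $\sigma F_n\neq -F_n$. The paper builds this in by making $P_n$ anti-invariant under $(1\,2)$, through the edge $\{1,2\}$ of $K_A$ in $\Psi_r$. Then $F_n$ is $(1\,2)$-invariant for $n=2k\geq8$ and $n=2k+1\geq11$. For $n=7,9$ it uses explicit witness points with $(6\,7)F\neq-F$. Your alternative, a point where $\Delta_n=0\neq N_n$, is never produced.
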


The proof is constructive.  Apart from the small cases $n=7$ and $n=9$, linear independence from the Hodges numerator is detected directly by a transposition: the Segre determinant and the correction monomial $P_n$ change sign under the transposition $(1\,2)$, so their product is invariant, whereas the Hodges numerator, coming from a determinant, is alternating.  For $n=7$ and $n=9$ we give points witnessing that the relevant polynomials are nonzero.

The paper is organized as follows.  In \Cref{sec:framework} we recall the spinor-helicity ring and develop the invariant theory of general Segre determinants in detail.  The correction monomial is introduced in \Cref{sec:psi}. The even and odd constructions are proved in \Cref{sec:even,sec:odd}, and the seven-point case is treated separately in \Cref{sec:seven}.  Witnessing points for $n=7,9$ are collected in the appendix.

\medskip
\noindent\textbf{AI Disclosure.}
The constructions in this paper were found using \texttt{ChatGPT Pro 5.6} and were subsequently verified and elaborated by the authors. More precisely, in the process of trying to prove the validity of \cite[Conjecture 2.1]{Gravity}, we came up with an induction procedure using representation theory. Then, we asked \texttt{ChatGPT Pro 5.6} to give a proof of this induction: however, it returned a counterexample to \cite[Conjecture 2.1]{Gravity} for $n=10$. Looking carefully at the counterexample, after feedback from Bernd Sturmfels, we realized that the central idea was the use of what we call Segre determinants in this article (which were studied recently by Pratt \cite{Segredeterminant}). In the counterexample, the AI tool was using an $8\times 8$ Segre determinant. We immediately realized that for $n=5,6$, Hodges numerators were actually Segre determinants. We generalized by ourselves this construction to any $n$ (Section \ref{sec:framework}). However, this construction by itself does not give the correct multidegrees. So \texttt{ChatGPT Pro 5.6} found for us monomials -- for both odd and even families -- that could close the gap with multidegrees and certify the memberships to the required ideals. We then checked and rewrote the proofs that these monomials indeed work (thus leading to write Sections \ref{sec:psi}, \ref{sec:even}, \ref{sec:odd}, \ref{sec:seven}). 

\medskip
\noindent\textbf{Acknowledgements.}
We thank Bernd Sturmfels and Jaroslav Trnka for useful discussions. E.V. and H.Y. are members of the GNSAGA group of INdAM. E.V. and H.Y. acknowledge that the preparation of this article was partially funded by the Italian Ministry of University and Research, through the project {\it Applied Algebraic Geometry of Tensors}, PRIN 2022 Protocol no. 2022NBN7TL.
J. K. is grateful for the support received from the Institute for Advanced Study and an anonymous donor.

\section{Spinor-helicity and Segre determinants}\label{sec:framework}

Let $L$ and $\widetilde L$ be two two-dimensional complex vector spaces and let
\[
 U=L^\vee,\qquad \widetilde U=\widetilde L^\vee.
\]
Thus $U$ and $\widetilde U$ are the spaces of linear coordinate functions on $L$ and $\widetilde L$, respectively.  Fix bases of $L$ and $\widetilde L$, and let $x,y\in U$ and $u,v\in\widetilde U$ be the corresponding dual bases.  On the $i$th copies $U_i\cong U$ and $\wt U_i\cong\wt U$, we write $x_i,y_i,u_i,v_i$ for the pullbacks of these coordinate functions.  A point is therefore written, in the chosen bases, as
\[
 \lambda_i=\binom{x_i}{y_i}\in L,\qquad
 \widetilde\lambda_i=\binom{u_i}{v_i}^{T}\in\widetilde L,
\]
where here $x_i,y_i,u_i,v_i$ denote the values of the corresponding coordinate functions at the given point.  Let the polynomial ring of functions on the spinor space be
\[
 S_n:=\C[L^{\oplus n}\oplus\widetilde L^{\oplus n}]
 =\Sym\bigl(U^{\oplus n}\oplus\widetilde U^{\oplus n}\bigr)
 \cong\C[x_1,y_1,\ldots,x_n,y_n,u_1,v_1,\ldots,u_n,v_n].
\]

We regard $p_{ij}=x_i y_j-y_i x_j=\det(\lambda_i,\lambda_j)$ and
$q_{ij}=u_i v_j-v_i u_j=\det(\widetilde\lambda_i,\widetilde\lambda_j)$ as variables.
The ring $S_n$ inherits a natural action of the algebraic group $\operatorname{SL}(U)\times\operatorname{SL}(\widetilde U)$.
The First Fundamental Theorem for $\operatorname{SL}(2,\C)$ \cite[Chapter 11, \S1.2, Theorem 1]{Procesibook} identifies the invariant subring of $S_n$ under this action with the tensor product of two Pl\"ucker rings
\[
 S_n^{\operatorname{SL}(U)\times\operatorname{SL}(\widetilde U)}
 \cong
 R_n:=\frac{\C[p_{ij},q_{ij}:1\leq i<j\leq n]}
 {(\text{$p$-Pl\"ucker relations, $q$-Pl\"ucker relations})}.
\]
This is the bihomogeneous coordinate ring of the product of two Grassmannians $\mathrm{Gr}(2,n)\times \mathrm{Gr}(2,n)$ in their Pl\"ucker embedding.
We shall use this identification repeatedly below.  Finally, set
\[
Q_n=R_n/(M_{ab}:1\leq a,b\leq n),\qquad
 M_{ab}=\sum_{r=1}^n p_{ar}q_{rb},
\]
where the $n^2$ relations $M_{ab}$ come from momentum-conservation relations of $n$ massless particles, each represented by the rank-one $2\times 2$ matrix $P_i = \lambda_i \widetilde{\lambda}_i$.
The ring $Q_n$ is the coordinate ring of the {\it spinor-helicity variety} $\mathrm{SH}(2,n,0)$
of El Maazouz, Pfister, and Sturmfels \cite{SH}.
This is the heart of the {\it spinor-helicity formalism}. Let $p_i$ be the complex four-momentum of a massless particle $i$. Then, in relativistic terms, the massless condition means that $p_i\cdot  p_i = p_i^2=0$, where the product is the Lorentzian inner product on $\C^4$; one can define an isomorphism between $\C^4$ and the space of $2\times 2$ complex matrices, given by $p_i\mapsto P_i$, such that the vanishing of the inner product $p_i^2=0$ is equivalent to the condition $\det(P_i)=0$.

\begin{remark}
    Let $V=\C^n$ with its standard basis $e_1,\ldots,e_n$ and its standard nondegenerate symmetric bilinear form.
    We look at the product of two Grassmannians $\mathrm{Gr}(2,n)\times \mathrm{Gr}(2,n)$. A point of the first Grassmannian is a $2$-dimensional plane $S\subset V$,  with Pl\"{u}cker coordinates $\pvar{ij}$. A point of the second Grassmannian is another $2$-dimensional plane $M\subset V$,  with Pl\"{u}cker coordinates $\qvar{ij}$.

    Choose $2\times n$ matrices $X$ and $Y$ whose row spans are $S$ and $M$.
    Then the momentum-conservation equations are expressed by the matrix equation $XY^T=0$.
    Equivalently, every vector in $S$ is orthogonal to every vector in $M$.
    Letting $W=M^\perp$, the condition $S\perp M$ becomes $S\subset W$.
    Hence a spinor-helicity point is identified with a two-step flag $S\subset W\subset V$, where $\dim S=2$ and $\dim W=n-2$.
    One can show that this identification yields an isomorphism of algebraic varieties between the spinor-helicity variety and the two-step flag variety $\mathbb{F}(2,n-2;n)$ in $V=\C^n$ \cite{SH}. In particular, $Q_n$ is an integral domain. Given $n$ spinors $\lambda_i$ and $\widetilde{\lambda}_i$, one has a pair of $2\times n$ matrices
    \[
     \Lambda:=(\lambda_1\ \cdots\ \lambda_n),
     \qquad
     \widetilde\Lambda:=(\widetilde\lambda_1^T\ \cdots\
     \widetilde\lambda_n^T).
    \]
    Then, each point of $\mathrm{SH}(2,n,0)$ can be represented by a pair of matrices $(\Lambda, \widetilde{\Lambda})$, where both have full rank and $\Lambda\widetilde{\Lambda}^T=0$. Thus, any such pair defines an evaluation homomorphism $Q_n\longrightarrow \C$ by taking the minors of $\Lambda$ and $\wt \Lambda$.
\end{remark}

For each pair $i<j$ put
\[
 J_{ij}=(p_{ij},q_{ij})\subset R_n,\qquad
 J_n=\bigcap_{i<j}\pi(J_{ij})\subset Q_n,
\]
where $\pi:R_n\to Q_n$ is the quotient map.  We use the \emph{spinor multidegree}
\[
 \deg p_{ij}=e_p+e_i+e_j\in \Z^{n+2},\qquad
 \deg q_{ij}=e_q-e_i-e_j\in \Z^{n+2},
\]
where $e_p, e_q, e_1,\ldots, e_n$ form the standard lattice basis of $\Z^{n+2}$. Then the ring $Q_n$ comes equipped with this spinor multidegree; for each $\alpha\in \Z^{n+2}$, $(Q_n)_{\alpha}$ is a finite-dimensional complex vector space.

\begin{remark}[Permutation symmetry of the Hodges numerator]
\label{rem:Hodges-alternating}
We shall use repeatedly that the Hodges numerator $N_n$ is alternating under
permutations of the labels. More precisely, for every
$\sigma\in\mathfrak{S}_n$,
\[
\sigma\cdot N_n=\operatorname{sgn}(\sigma)\,N_n.
\]
Equivalently, $N_n$ spans a copy of the sign representation of
$\mathfrak{S}_n$ inside $(J_n)_{d(n)}$. In particular, if a nonzero element
$F\in (J_n)_{d(n)}$ is invariant under some transposition, then
$F$ cannot be proportional to $N_n$.
\end{remark}

\subsection{Segre determinants}\label{subsec: Segre dets}

Fix integers $a,b\geq1$ and let $m=(a+1)(b+1)$.
The space of global sections of \(\cO_{\P(L)\times\P(\widetilde L)}(a,b)\) is
\[
 W_{a,b}:=H^0\!\left(\P(L)\times\P(\widetilde L),\cO(a,b)\right)
 \cong \Sym^a U\otimes\Sym^b\widetilde U,
\]
so $\dim W_{a,b}=m$.  The monomials
\[
 x^{a-r}y^r u^{b-s}v^s,
 \qquad 0\leq r\leq a,\quad 0\leq s\leq b,
\]
form a basis of $W_{a,b}$.  The associated complete linear system gives the {\it Segre--Veronese embedding}:
\[
 \nu_{a,b}:\P(L)\times\P(\widetilde L)\longrightarrow\P(W_{a,b}^{\vee})\cong\P^{m-1}.
\]
Let $I=\lbrace i_1,\ldots, i_m\rbrace\subseteq [n]$, so $|I|=m$.
Let $M_{a,b}(I)$ be the $m\times m$ matrix whose row indexed by $i\in I$ is given by the values $x_i^{a-r}y_i^r u_i^{b-s}v_i^s$.
We define
\[
 \Delta_{a,b}(I):=\det M_{a,b}(I)\in S_n.
\]
This is the {\it Segre determinant} associated with $\cO(a,b)$ in the sense of \cite{Segredeterminant}.

\begin{proposition}\label{prop:segre-basic}
    The polynomial $\Delta_{a,b}(I)$ is invariant under the action of
    $\operatorname{SL}(U)\times\operatorname{SL}(\widetilde U)$ on $S_n$.
    It defines a nonzero class in the Pl\"ucker ring $R_n$ of spinor multidegree
    \begin{equation}\label{eq:degree-general}
     \deg\Delta_{a,b}(I)
     =\left(\frac{am}{2},\frac{bm}{2};(a-b)^{m},0^{n-m}\right).
    \end{equation}
    Here the label coordinates are ordered with those in $I$ first.
    Moreover, for every pair of distinct indices $i,j\in I$,
    \begin{equation}\label{eq:segre-pair}
     \Delta_{a,b}(I)\in(p_{ij},q_{ij})\subset R_n.
    \end{equation}
    Finally, if $\sigma$ is a permutation of the labels in $I$, then
    \begin{equation}\label{eq:segre-sign}
     \sigma\cdot\Delta_{a,b}(I)=\operatorname{sgn}(\sigma)\Delta_{a,b}(I).
    \end{equation}
\end{proposition}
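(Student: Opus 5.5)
Invariance comes first. The group $\operatorname{SL}(U)\times\operatorname{SL}(\widetilde U)$ acts on each row of $M_{a,b}(I)$ through the representation $\Sym^a U\otimes\Sym^b\widetilde U=W_{a,b}$. Concretely, a pair $(g,h)$ sends the row vector of monomials $(x_i^{a-r}y_i^ru_i^{b-s}v_i^s)_{r,s}$ to that same row multiplied on the right by the fixed $m\times m$ matrix $\rho(g,h)=\Sym^a g\otimes\Sym^b h$. Hence $M_{a,b}(I)\mapsto M_{a,b}(I)\rho(g,h)$ and $\Delta_{a,b}(I)$ is multiplied by $\det\rho(g,h)$. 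The map $\det\rho$ is a character of the connected semisimple group $\operatorname{SL}_2\times\operatorname{SL}_2$, so it is trivial. By the First Fundamental Theorem quoted above, $\Delta_{a,b}(I)$ is therefore a polynomial in the $p_{ij},q_{ij}$, i.e.\ a class in $R_n$.

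The sign property \eqref{eq:segre-sign} is immediate, because permuting labels permutes rows. For nonvanishing in $R_n$ it suffices to show that $\Delta_{a,b}(I)\neq0$ in $S_n$ (since $R_n\subset S_n$). The evaluation map $\P^1\times\P^1\to W_{a,b}^\vee$ has image spanning $W_{a,b}^\vee$, since the complete linear system has no base locus and no nonzero section vanishes identically. So $m$ generic points give linearly independent rows. One could also exhibit explicit points: take $a+1$ distinct values $[1:t_r]$ and $b+1$ distinct values $[1:s_k]$ on a grid. Then the matrix is a Kronecker product of two Vandermonde matrices, whose determinant is nonzero.

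For the degree \eqref{eq:degree-general}, the determinant is multihomogeneous in the spinor variables. For each $i\in I$ it has degree $a$ in $(x_i,y_i)$ and degree $b$ in $(u_i,v_i)$, and it has degree $0$ in the variables of labels outside $I$. Under the dictionary, $p_{ij}$ has degree $1$ in $\lambda_i$ and in $\lambda_j$, while $q_{ij}$ has degree $1$ in $\widetilde\lambda_i$ and in $\widetilde\lambda_j$. An invariant of degree $a$ in each of the $m$ vectors $\lambda_i$ therefore involves $am/2$ factors of type $p$, and similarly $bm/2$ factors of type $q$. The label coordinate is then $(\#\lambda_i)-(\#\widetilde\lambda_i)=a-b$ for $i\in I$ and $0$ otherwise. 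The halves are integers because $\Delta\neq0$ forces $am$ and $bm$ to be even; this also follows from invariance under $-1\in\operatorname{SL}_2$.

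The pair condition \eqref{eq:segre-pair} is the step needing care. The main obstacle is that vanishing on $V(p_{ij},q_{ij})$ only gives membership in the radical, so one must know that $(p_{ij},q_{ij})$ is prime (or radical) in $R_n$. I would argue as follows. On the locus $p_{ij}=q_{ij}=0$ in spinor space, $\lambda_i\parallel\lambda_j$ and $\widetilde\lambda_i\parallel\widetilde\lambda_j$, so the rows $i$ and $j$ of $M_{a,b}(I)$ are proportional and $\Delta_{a,b}(I)$ vanishes there. Next, $R_n/(p_{ij},q_{ij})\cong A\otimes A$, where $A=$ Plücker ring$/(p_{ij})$. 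The ideal $(p_{ij})$ in the Grassmannian coordinate ring is prime: it cuts out a Schubert divisor, namely the variety of $2$-planes meeting a fixed codimension-$2$ coordinate space, and this is a reduced irreducible hyperplane section (standard monomial theory). Hence $A$ is a domain and $A\otimes_\C A$ is a domain. The invariant $\Delta_{a,b}(I)$, viewed in $R_n$, vanishes on the corresponding subvariety of $\mathrm{Gr}(2,n)\times\mathrm{Gr}(2,n)$, since every point there lifts to spinors. So it lies in the prime ideal $(p_{ij},q_{ij})$.

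If one prefers to avoid citing primality, there is a fallback using the $\operatorname{SL}$-invariant Reynolds operator. One would write $\lambda_j$ in terms of $\lambda_i$ and an auxiliary vector, and subtract a multiple of row $i$ from row $j$, hoping to obtain $\Delta$ explicitly as $p_{ij}A+q_{ij}B$ with invariant $A$ and $B$. I expect that route to be messier, so I would go with primality.
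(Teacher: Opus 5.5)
Your proposal is correct. The nonvanishing, degree and sign steps match the paper's. It takes a different route in two places.

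\textbf{Invariance.} The paper computes $\det\rho_{a,b}(g,h)=1$ explicitly, using $\det(\Sym^aT)=(\det T)^{a(a+1)/2}$ and the determinant formula for Kronecker products. You instead note that $\det\rho$ is a character of the connected semisimple group $\operatorname{SL}_2\times\operatorname{SL}_2$, hence trivial. This is shorter and equally valid.

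\textbf{The pair condition \eqref{eq:segre-pair}.} This is the substantive difference.
\begin{itemize}
\item The paper never proves primality in the invariant ring. It works in the polynomial ring $S_I$, where $(p_{ij},q_{ij})S_I$ is visibly prime: the quotient is a tensor product of two coordinate rings of rank-$\le1$ $2\times2$ matrices with a polynomial ring. The Nullstellensatz there gives $\Delta_{a,b}(I)=p_{ij}A+q_{ij}B$ with $A,B\in S_I$. The $S_I^G$-linear Reynolds operator then makes the coefficients invariant.
\item You prove that $(p_{ij},q_{ij})R_n$ is itself prime, via standard monomial theory for the Schubert divisor $\{p_{ij}=0\}$ and the fact that tensor products of finitely generated $\C$-domains are domains. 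You then use surjectivity of the quotient map onto the product of affine cones (every point lifts to spinors) to transfer the vanishing of $\Delta_{a,b}(I)$ from spinor space to $\operatorname{Spec}R_n$.
\end{itemize}
Your route gives the stronger structural fact that $(p_{ij},q_{ij})$ is prime in $R_n$, and it avoids Reynolds operators. The cost is citing the nontrivial result that a single Pl\"ucker coordinate cuts out the Schubert divisor scheme-theoretically, with a domain as quotient. The paper's route needs only elementary primality in a polynomial ring plus linear reductivity.

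The Reynolds fallback you sketch is more complicated than necessary. No explicit row reduction or auxiliary vector is needed: applying the Reynolds operator to any decomposition $p_{ij}A+q_{ij}B$ in $S_I$ already produces invariant coefficients.
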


\begin{proof}
    We divide the proof into five steps.

    \smallskip
    \noindent\emph{Step 1: the group action.}
    Let $G:=\operatorname{SL}(U)\times\operatorname{SL}(\widetilde U)$.
    For each index $i$, the degree-one polynomial variables $x_i,y_i$ span a copy $U_i\simeq U$, while $u_i,v_i$ span a copy $\widetilde U_i\simeq\widetilde U$.  The group $G$ acts diagonally on
    $\bigoplus_{i=1}^n U_i\oplus\bigoplus_{i=1}^n\widetilde U_i$,
    and this action extends multiplicatively to the polynomial ring $S_n$.  In particular, $G$ acts on $W_{a,b}=\Sym^aU\otimes\Sym^b\widetilde U$.
    Let $\rho_{a,b}(g,h)$ be the matrix of this irreducible representation in the monomial basis
    \[
     x^{a-r}y^r u^{b-s}v^s,
     \qquad 0\leq r\leq a,\quad 0\leq s\leq b.
    \]
    The action on the monomials in each row of $M_{a,b}(I)$ is given by right multiplication by the transpose of $\rho_{a,b}(g,h)$.
    More succinctly,
    \[
        (g,h)\bullet M_{a,b}(I) = M_{a,b}(I) \left(\rho_{a,b}(g,h)\right)^T.
    \]
    It remains to compute the determinant
    of this matrix.

    For a two-dimensional complex vector space $E$ and $T\in\operatorname{GL}(E)$, one has $\det(\Sym^aT)=(\det T)^{a(a+1)/2}$.
    Indeed, if $T$ has eigenvalues $\alpha,\beta$, then $\Sym^aT$ has eigenvalues $\alpha^a,\alpha^{a-1}\beta,\ldots,\alpha\beta^{a-1},\beta^a$,
    whose product is $(\alpha\beta)^{a(a+1)/2}$.  The identity for arbitrary $T$ follows because both sides are polynomial characters on $\operatorname{GL}(E)$.  Hence, for $g\in\operatorname{SL}(U)$ and $h\in\operatorname{SL}(\widetilde U)$,
    $\det(\Sym^ag)=\det(\Sym^bh)=1$.
    Using the property of the determinant of the Kronecker product of two matrices, $\det(A\otimes B)=\det(A)^{\dim B}\det(B)^{\dim A}$,
    we obtain
    \[
     \det\rho_{a,b}(g,h)
     =\det(\Sym^ag)^{b+1}\det(\Sym^bh)^{a+1}=1.
    \]
    Thus $\Delta_{a,b}(I)$ is invariant under $G=\operatorname{SL}(U)\times\operatorname{SL}(\widetilde U)$.

    \smallskip
    \noindent\emph{Step 2: from spinor to Pl\"ucker coordinates.}
    Let
    \[
     A_I:=\Sym(U^{\oplus I}),\qquad
     B_I:=\Sym(\widetilde U^{\oplus I}).
    \]
    Then the polynomial ring on the supporting spinors is $S_I=A_I\otimes B_I$.  The first factor of $G$, i.e. $\operatorname{SL}(U)$ acts only on $A_I$, and the second factor $\operatorname{SL}(\widetilde U)$ acts only on $B_I$.  Taking invariants, we have $S_I^G
     =A_I^{\operatorname{SL}(U)}\otimes B_I^{\operatorname{SL}(\widetilde U)}$.
    The First Fundamental Theorem for $\operatorname{SL}(2,\C)$ states that $A_I^{\operatorname{SL}(U)}$ is generated by the $2\times2$ determinants $p_{ij}$, $i,j\in I$, and that the only relations among these generators are the Pl\"ucker relations; the same statement holds for the $q_{ij}$ on the second factor.  Thus, by \cite[Chapter 11, \S1.2, Theorem 1]{Procesibook},
    \begin{equation}\label{eq:FFT-product}
     S_I^G\cong
     \frac{\C[p_{ij}:i<j,\ i,j\in I]}{(p\text{-Pl\"ucker relations})}
     \otimes
     \frac{\C[q_{ij}:i<j,\ i,j\in I]}{(q\text{-Pl\"ucker relations})}.
    \end{equation}
    The right-hand side is the tensor product Pl\"ucker ring on the labels $I$, naturally embedded in $R_n$.  Since {\it Step 1} shows that $\Delta_{a,b}(I)\in S_I^G$, the determinant is a well-defined polynomial class in $R_n$.  We use the same symbol for this class.

    \smallskip
    \noindent\emph{Step 3: nonvanishing.}
    We prove that the determinant polynomial is not identically zero.  Choose pairwise distinct complex numbers
    $t_0,\ldots,t_a, \ell_0,\ldots,\ell_b$. Consider the $m=(a+1)(b+1)$ pairs $(t_r,\ell_s)$ with $0\le r\le a$ and $0\le s\le b$. For each such pair $(t_r,{\ell_s})$, define
    \[
     \lambda_{r,s}=(1,t_r)^T,\qquad
     \widetilde\lambda_{r,s}=(1,\ell_s).
    \]
    With columns ordered by the monomials $x^{a-r'}y^{r'}u^{b-s'}v^{s'}$, the corresponding Segre matrix is, up to simultaneous row and column reordering, the Kronecker product \linebreak $V_a(t_0,\ldots,t_a)\otimes V_b(\ell_0,\ldots,\ell_b)$, where $V_a$ and $V_b$ are the usual Vandermonde matrices.  Hence
    $\det M_{a,b}
     =\det(V_a)^{b+1}\det(V_b)^{a+1}\neq0$.
    Therefore $\Delta_{a,b}(I)$ is a nonzero polynomial in $S_I$, and, by the invariant-ring identification \eqref{eq:FFT-product}, a nonzero class in the product Pl\"ucker ring.

    \smallskip
    \noindent\emph{Step 4: the spinor multidegree.}
    The determinant is separately multihomogeneous in the spinors attached to every label $i$.  Indeed, every term in the determinant expansion uses exactly one entry from the row indexed by $i$.  Every entry in that row is homogeneous of degree $a$ in the coordinates $x_i,y_i$ of $\lambda_i$ and of degree $b$ in the coordinates $u_i,v_i$ of $\wt\lambda_i$.  Thus, in the polynomial ring $S_I$,
    \[
     \deg_{\lambda_i}\Delta_{a,b}=a,\qquad
     \deg_{\widetilde\lambda_i}\Delta_{a,b}=b
     \qquad(\mbox{ for every } i\in I),
    \]
    and both degrees are zero for labels outside $I$.

    Under the isomorphism \eqref{eq:FFT-product}, a factor $p_{ij}$ has degree one in the $L$-coordinates of label $i$ and one in those of label $j$, while a factor $q_{ij}$ has degree one in the $\widetilde L$-coordinates of labels $i$ and $j$.  Therefore the total number of $p$-factors of every monomial in $p_{ij},q_{ij}$ of an expansion of $\Delta_{a,b}$ in such variables is
    \[
     \deg_p\Delta_{a,b}=\frac{1}{2}\sum_{i\in I}a=\frac{am}{2},
    \]
    and similarly
    \[
     \deg_q\Delta_{a,b}=\frac{1}{2}\sum_{i\in I}b=\frac{bm}{2},
    \]
    where $m=|I|$.
    (These are integers because $a(a+1)$ and $b(b+1)$ are even.)  Finally, the $i$th label component of the spinor multidegree is, by definition, the number of $p$-occurrences of $i$ minus the number of $q$-occurrences of $i$.  Hence this component is $a-b$ for every $i\in I$, and $0$ outside $I$.  This proves \eqref{eq:degree-general}.

    \smallskip
    \noindent\emph{Step 5: ideal memberships.}
    Fix distinct $i,j\in I$ and work first in the polynomial ring $S_I$.  Consider the ideal
    \[
     K_{ij}:=(p_{ij},q_{ij})S_I.
    \]
    We claim that $K_{ij}$ is a prime ideal of $S_I$.  Geometrically, $p_{ij}=0$ is the irreducible determinantal hypersurface on which the two vectors $\lambda_i,\lambda_j$ are linearly dependent, and $q_{ij}=0$ is the analogous irreducible determinantal hypersurface for $\widetilde\lambda_i,\widetilde\lambda_j$; since the two equations involve disjoint sets of coordinates, their common zero set is the product of these two irreducible determinantal varieties with an affine space for the remaining labels, hence is irreducible.  Equivalently, $S_I/K_{ij}$ is a domain.

    On the zero locus $V(K_{ij})$, both pairs of spinors are linearly dependent.  Thus, up to exchanging $i$ and $j$, there are scalars $c,d$ such that $\lambda_j=c\lambda_i,
     \widetilde\lambda_j=d\widetilde\lambda_i$.
    The $j$th evaluation row of $M_{a,b}(I)$ is then $c^ad^b$ times the $i$th row. Hence $\Delta_{a,b}(I)$ vanishes identically on $V(K_{ij})$.  Since $K_{ij}$ is prime, Hilbert's Nullstellensatz yields
    $\Delta_{a,b}(I)\in K_{ij}$.
    Thus there exist $A,B\in S_I$ such that
    \begin{equation}\label{eq:pair-decomposition-polynomial}
     \Delta_{a,b}(I)=p_{ij}A+q_{ij}B.
    \end{equation}
    At this stage $A$ and $B$ need not be invariant. Now, the algebraic group $G=\operatorname{SL}(U)\times\operatorname{SL}(\widetilde U)$ is linearly reductive.  Hence there exists a Reynolds operator
    \[
     \mathcal R:S_I\longrightarrow S_I^G
    \]
    which is the identity on $S_I^G$ and is $S_I^G$-linear; see \cite[Chapters 8 and 14]{Procesibook}. (Such a Reynolds operator may be realized by integration over a maximal compact subgroup equipped with a suitable measure.)   The three polynomials $\Delta_{a,b}(I)$, $p_{ij}$ and $q_{ij}$ are $G$-invariant.  Applying $\mathcal R$ to \eqref{eq:pair-decomposition-polynomial} therefore gives
    \[
     \Delta_{a,b}(I)
     =p_{ij}\mathcal R(A)+q_{ij}\mathcal R(B),
    \]
    with $\mathcal R(A),\mathcal R(B)\in S_I^G$.  Via the First Fundamental Theorem isomorphism \eqref{eq:FFT-product}, these invariant coefficients are classes in the Pl\"ucker ring $R_n$.  Consequently $\Delta_{a,b}(I)\in(p_{ij},q_{ij})$ in $R_n$.  This proves \eqref{eq:segre-pair}.

    Finally, a permutation $\sigma$ of the labels in $I$ simply permutes the rows of the evaluation matrix $M_{a,b}(I)$.  Hence its determinant is multiplied by $\operatorname{sgn}(\sigma)$, which proves \eqref{eq:segre-sign} and completes the proof.
\end{proof}

\begin{remark}[The unique five- and six-point cases]\label{rem:five-six}
    Segre geometry already appears in the two multiplicities for which uniqueness was proved in \cite{Gravity}.

    For $n=5$, one has $d(5)=(2,2;0^5)$,
    which is the spinor multidegree of $\Delta_{1,1}$ on any four labels.  Let $M$ be the $5\times4$ matrix whose $i$th row is $(x_i u_i,\ x_i v_i,\ y_i u_i,\ y_i v_i)$.
    Momentum conservation $\Lambda\widetilde\Lambda^T=0$ is exactly the statement that the sum of the five rows $R_i$ of $M$ is zero.  For $r\in[5]$, let $D_r$ be the determinant of the $4\times4$ matrix obtained by deleting the $r$th row.  So, momentum conservation gives the row relation $R_1+R_2+R_3+R_4+R_5=0$. For $r<5$, substituting
    $R_5=-\sum_{j=1}^4R_j$ into the computation of the determinant $D_r$ implies that each determinant is $\pm D_5$.
    Thus, in $Q_5$, these determinants have, up to scaling, a common class, denoted $\Delta_5$.  By \Cref{prop:segre-basic}, $\Delta_5\in (p_{ij},q_{ij})$ and has degree $d(5)$.  It is easy to see that it must be nonzero. Five-point uniqueness from \cite{Gravity} then implies that $\Delta_5$ is proportional to the five-point Hodges numerator.  Thus the first two unique MHV numerators themselves admit a direct Segre-determinantal interpretation.

    For $n=6$, one has $d(6)=(6,3;1^6)$. Taking $(a,b)=(2,1)$ gives $m=6$, and \Cref{prop:segre-basic} yields $\deg\Delta_{2,1}(1,\ldots,6)=(6,3;1^6)=d(6)$.
    The same proposition shows that $\Delta_{2,1}$ belongs to every ideal $(p_{ij},q_{ij})$ with $1\leq i<j\leq6$.  Its image in $Q_6$ is nonzero.  Indeed, take $\lambda_i=(1,i-1)^T$ for $1\leq i\leq6$, and set
    \[
     \widetilde\lambda_1=(1,0),\quad
     \widetilde\lambda_2=(-2,0),\quad
     \widetilde\lambda_3=(1,0),
    \]
    \[
     \widetilde\lambda_4=(0,1),\quad
     \widetilde\lambda_5=(0,-2),\quad
     \widetilde\lambda_6=(0,1).
    \]
    One checks that $\Lambda\widetilde\Lambda^T=0$. After grouping the $u$- and $v$-columns, the Segre matrix is block diagonal with two nonzero $3\times3$ Vandermonde blocks.  Hence $\Delta_{2,1}\neq0$ in $Q_6$, and therefore
    $0\neq\Delta_{2,1}\in(J_6)_{d(6)}$.
    Since \cite{Gravity} proves that this space is one-dimensional, $\Delta_{2,1}$ is proportional to the six-point Hodges numerator.
\end{remark}

\section{A correction monomial}\label{sec:psi}

We now introduce a class of monomials $\Psi_r$ that we shall need later.

\begin{construction}
    The monomial $\Psi_r$ is defined via a graph.
    Fix $r\geq0$ and let $N=2r+6$ and $V=\{1,\ldots,N\}$.
    Let $A=\{1,\ldots,r+1\}$ and $B=V\setminus A$, so $|B|=r+5$.
    Then, let $C_r$ be the graph on $B$ constructed as follows.
    Identify the vertices in $B$ with elements of $\Z/(r+5)\Z$.

    \begin{enumerate}[
        leftmargin=6em,
        labelwidth=4em,
        labelsep=1em,
        align=left
    ]
        \item[\textbf{\boldmath $r$ even}]
        Then $|B|$ is odd, and $i,j\in\Z/|B|\Z$ form an edge
        of $C_r$ if and only if
        $j-i\equiv\pm1,\ldots,\pm r/2$ (modulo $|B|$).

        \item[\textbf{\boldmath $r$ odd}]
        Then $|B|$ is even, and $i,j\in\Z/|B|\Z$ form an edge
        of $C_r$ if and only if
        $j-i\equiv\pm1,\ldots,\pm(r-1)/2,\pm|B|/2$
        (modulo $|B|$).
    \end{enumerate}
\end{construction}

In both cases, the constructed graph $C_r$ is $r$-regular, and it is called a {\it circulant graph}.
Let $G_r=K_A\sqcup C_r$, where $K_A$ is the complete graph on the vertex set $A$.
Then $G_r$ is $r$-regular on $V$.  Let
\[
 \mathcal M_r=\{\{3,4\},\{5,6\},\ldots,\{2r+1,2r+2\}\}
\]
be a matching of size $r$ (empty if $r=0$), and define the monomial
\begin{equation}\label{eq:Psi}
 \Psi_r=
 \left(\prod_{\{i,j\}\in E(G_r)}p_{ij}\right)
 \left(\prod_{\{i,j\}\in\mathcal M_r}p_{ij}q_{ij}\right).
\end{equation}

\begin{lemma}\label{lem:Psi}
The spinor multidegree of $\Psi_r$ is
\begin{equation}\label{eq:Psi-degree}
 \deg\Psi_r=\bigl(r(r+4),r;r^N\bigr).
\end{equation}
Moreover, if $r\geq 1$,
\begin{equation}\label{eq:Psi-sign}
 (1\,2)\Psi_r=-\Psi_r.
\end{equation}
\end{lemma}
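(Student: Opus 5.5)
The plan is to read everything off the graph $G_r$ and the matching $\mathcal M_r$: the degree comes from counting edges and vertex degrees, and the sign from how $(1\,2)$ acts on the factors of \eqref{eq:Psi}. No relations in $R_n$ are needed. $\Psi_r$ is a monomial, so its multidegree is the sum of the degrees of its factors, and the transposition acts factor by factor.

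\emph{Degree.} I will count the $p$- and $q$-factors separately.
\begin{itemize}
\item[$\circ$] The graph $G_r$ is $r$-regular on $N=2r+6$ vertices (as noted after the construction), so it has $rN/2=r(r+3)$ edges.
\item[$\circ$] The matching contributes $r$ further $p$-factors and $r$ $q$-factors.
\item[$\circ$] Hence the $e_p$-component is $r(r+3)+r=r(r+4)$ and the $e_q$-component is $r$.
\end{itemize}
For the label components I use that $\deg p_{ij}$ adds $+1$ at $i$ and $j$, and $\deg q_{ij}$ adds $-1$ there.
\begin{itemize}
\item[$\circ$] Each vertex receives $+r$ from $E(G_r)$, by regularity.
\item[$\circ$] A vertex covered by $\mathcal M_r$ receives an extra $+1-1=0$ from the factor $p_{ij}q_{ij}$.
\item[$\circ$] Every label component is therefore $r$, which proves \eqref{eq:Psi-degree}.
\end{itemize}
The only thing to double-check is that $C_r$ really is $r$-regular in both parities.
\begin{itemize}
\item[$\circ$] For $r$ even, the offsets $\pm1,\dots,\pm r/2$ are distinct and nonzero modulo the odd number $r+5$.
\item[$\circ$] For $r$ odd, $|B|/2$ is self-inverse and distinct from $\pm1,\dots,\pm(r-1)/2$, since $(r-1)/2<|B|/2$.
\item[$\circ$] Also $K_A$ is $r$-regular because $|A|=r+1$.
\end{itemize}

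\emph{Sign.} Let $\sigma=(1\,2)$ act by $\sigma\cdot p_{ij}=p_{\sigma(i)\sigma(j)}$, and similarly for $q$, with $p_{ji}=-p_{ij}$. For $r\ge1$ we have $A=\{1,\dots,r+1\}\ni 1,2$, so I split the factors into those that avoid $\{1,2\}$ and those that meet it.
\begin{itemize}
\item[$\circ$] $B=\{r+2,\dots,N\}$ has minimum at least $3$, so $C_r$ does not involve $1$ or $2$.
\item[$\circ$] The matching $\mathcal M_r$ starts at $\{3,4\}$, so it does not involve $1$ or $2$ either.
\item[$\circ$] Hence all factors from $C_r$ and $\mathcal M_r$ are fixed by $\sigma$.
\item[$\circ$] Within $K_A$, for each $k\in A\setminus\{1,2\}$ the factors $p_{1k}$ and $p_{2k}$ are swapped, so their product is fixed.
\item[$\circ$] The remaining factor is $p_{12}$, which goes to $p_{21}=-p_{12}$.
\end{itemize}
This gives $\sigma\Psi_r=-\Psi_r$, which is \eqref{eq:Psi-sign}.

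There is no real obstacle. The only care needed is fixing the sign convention for the action on Plücker variables and checking that $1,2\notin B$ and $1,2$ avoid the matching. For $r=0$ the claim fails, since $1,2\in B$ then, which is why the hypothesis $r\ge1$ is needed.
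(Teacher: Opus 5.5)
Your proposal is correct and follows the paper's proof essentially verbatim. The degree comes from the edge count $rN/2=r(r+3)$ of the $r$-regular graph $G_r$ plus the $r$ matching factors, with the $p_{ij}q_{ij}$ contributions cancelling at matched vertices. The sign comes from $p_{12}\mapsto p_{21}=-p_{12}$, while the other $K_A$-edges at $1,2$ are swapped in pairs and $C_r$, $\mathcal M_r$ avoid $\{1,2\}$. Your explicit check that $C_r$ is $r$-regular is a detail the paper leaves implicit.

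One small slip in your closing aside: for $r=0$ one has $A=\{1\}$, so $1\notin B$. The sign claim fails there simply because $\Psi_0=1$ is an empty product.
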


\begin{proof}
The $r$-regular graph $G_r$ has $Nr/2=(r+3)r$ edges.
For the total $p$-degree, each edge of $G_r$ contributes one $p_{ij}$; moreover, the matching contributes $r$ $p_{ij}$'s. Therefore the total $p$-degree is $(r+3)r+r=r(r+4)$. For the total $q$-degree, only the matching contributes $r$.

At an endpoint $i$ of the matching, the additional occurrences in the $p$- and $q$-factors cancel, so the difference between $p$-occurrences and $q$-occurrences is $r$ for each matched vertex. At a vertex outside the matching, there is no change, so there are only $r$ $p$-occurrences and zero $q$-occurrences, hence their difference is again $r$.   This proves the multidegree \eqref{eq:Psi-degree}.

When $r\geq 1$, the set of edges of $G_r$ is invariant under $(1\,2)$, because $1,2$ appear as vertices in the complete graph $K_A$. This graph contains the edge $\{1,2\}$, which contributes one sign because $p_{21}=-p_{12}$; all other edges incident to $1$ or $2$ are exchanged in pairs.  The circulant part and the matching avoid the labels $1,2$.  Hence \eqref{eq:Psi-sign}.
\end{proof}

\section{Even multiplicity}\label{sec:even}

In this section we establish our main result, Theorem~\ref{thm:main}, for the even case.

\medskip

Let $n=2k\geq8$ and define
$r=k-3$.
Since $W_{k-1,1}=\Sym^{k-1}U\otimes\widetilde U$ has dimension $2k=n$,
there is a Segre determinant $\Delta_{k-1,1}$ defined on all $n$ labels.  By \Cref{prop:segre-basic},
\[
 \deg\Delta_{k-1,1}
 =\bigl(k(k-1),k;(k-2)^n\bigr).
\]
Because $n=2r+6$, the polynomial $\Psi_r$ of \Cref{sec:psi} is defined on the same label set.

\begin{theorem}[Even family]\label{thm:even}
For $n=2k\geq8$, define
\begin{equation}\label{eq:Feven}
 F_{2k}=\Delta_{k-1,1}(1,\ldots,2k)\,\Psi_{k-3}.
\end{equation}
Then
\[
 0\neq F_{2k}\in(J_{2k})_{d(2k)}, \mbox{ and } (1\,2)F_{2k}=F_{2k}.
\]
In particular, $F_{2k}$ is not proportional to the Hodges numerator.
\end{theorem}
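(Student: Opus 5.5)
The plan is to verify four things separately: the multidegree, membership in every $\pi(J_{ij})$, the sign behaviour under $(1\,2)$, and nonvanishing in $Q_{2k}$. Throughout, put $r=k-3\geq1$ and $n=2k=2r+6$.

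\emph{Degree.} Degrees are additive, so \Cref{prop:segre-basic} and \Cref{lem:Psi} give
\[
\deg F_{2k}=\bigl(k(k-1)+r(r+4),\ k+r;\ (k-2+r)^n\bigr).
\]
Substituting $r=k-3$, we have $r(r+4)=k^2-2k-3$. The three components become:
\begin{itemize}[label=$\cdot$]
\item $p$-degree: $2k^2-3k-3=(n^2-3n-6)/2$;
\item $q$-degree: $2k-3=n-3$;
\item each label component: $2k-5=n-5$.
\end{itemize}
This is exactly $d(n)$.

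\emph{Ideal membership.} The Segre determinant is supported on all $n$ labels. By \eqref{eq:segre-pair}, $\Delta_{k-1,1}\in(p_{ij},q_{ij})\subset R_n$ for every pair $i<j$. Ideals are closed under multiplication by ring elements, so $F_{2k}\in J_{ij}$ in $R_n$. Applying $\pi$ gives $F_{2k}\in\pi(J_{ij})$ for all $i<j$, hence $F_{2k}\in J_n$.

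\emph{Symmetry.} By \eqref{eq:segre-sign}, $(1\,2)\Delta_{k-1,1}=-\Delta_{k-1,1}$. By \eqref{eq:Psi-sign}, which applies since $r\geq1$, also $(1\,2)\Psi_r=-\Psi_r$. Hence $(1\,2)F_{2k}=F_{2k}$. Once nonvanishing is established, \Cref{rem:Hodges-alternating} shows $F_{2k}$ is not proportional to $N_n$.

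\emph{Nonvanishing.} This is the main step. Since $Q_n$ is a domain (it is the coordinate ring of the flag variety $\mathbb F(2,n-2;n)$), it suffices to show that each factor is nonzero in $Q_n$.

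For $\Psi_r$, a product of $p_{ij}$'s and $q_{ij}$'s, it is enough that each $p_{ij}$ and $q_{ij}$ is nonzero in $Q_n$. Take a general pair $(\Lambda,\widetilde\Lambda)$ with $\Lambda\widetilde\Lambda^T=0$. Here $\Lambda$ can be any general $2\times n$ matrix, and $\widetilde\Lambda$ a general matrix whose rows span a general $2$-plane in the $(n-2)$-dimensional orthogonal complement of the row space of $\Lambda$. For such a point all $2\times2$ minors of both matrices are nonzero, since these are nonempty open conditions on an irreducible variety. Alternatively, one can exhibit explicit points.

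For $\Delta_{k-1,1}$ I would generalize the six-point witness of \Cref{rem:five-six}.
\begin{itemize}[label=$\cdot$]
\item Take $\lambda_i=(1,t_i)^T$ with distinct $t_i$.
\item Split the labels into $\{1,\dots,k\}$ and $\{k+1,\dots,2k\}$.
\item Set $\widetilde\lambda_i=(c_i,0)$ for $i\leq k$ and $\widetilde\lambda_i=(0,c_i)$ for $i>k$.
\end{itemize}
Momentum conservation then reduces to $\sum_{i\leq k}c_i\lambda_i=0$ and $\sum_{i>k}c_i\lambda_i=0$. Each is a linear relation among $k\geq4$ vectors in $\C^2$, which admits a solution with all $c_i\neq0$ (e.g.\ supported on three general vectors and extended generically, or just any generic element of the $(k-2)$-dimensional kernel).

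Now group the columns $x^{k-1-s}y^s u$ and $x^{k-1-s}y^s v$. The Segre matrix becomes block diagonal, with two $k\times k$ blocks. Each block is a Vandermonde matrix in the $t_i$ with rows scaled by $c_i$. Its determinant is therefore $\pm\prod_i c_i\prod_{\text{blocks}}\prod_{i<j}(t_j-t_i)\neq0$.

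The only delicate point is making sure all $c_i$ can be chosen nonzero simultaneously, and this is where I would be careful. If the kernel argument feels shaky, I would give explicit $t_i$ and $c_i$ satisfying both relations.
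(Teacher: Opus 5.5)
Your proposal is correct and follows the paper's proof essentially step for step. It uses the same multidegree bookkeeping, ideal membership via \eqref{eq:segre-pair}, the same two-block witness with $\widetilde\lambda_i$ proportional to $(1,0)$ or $(0,1)$ making the Segre matrix block diagonal with two Vandermonde-type blocks, the domain property of $Q_{2k}$ to handle the monomial $\Psi_{k-3}$, and the same $(1\,2)$-invariance argument against the alternating Hodges numerator. The step you leave sketched, choosing all $c_i\neq0$, is settled in the paper exactly as you suggest: for each $i$, writing $\lambda_i$ in the basis $\lambda_j,\lambda_\ell$ gives a kernel vector with nonzero $i$th coordinate, so a generic kernel element avoids every coordinate hyperplane.
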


\begin{proof}
Using \Cref{lem:Psi}, the spinor multidegrees add to
\begin{align*}
\deg_pF_{2k}
 &=k(k-1)+(k-3)(k+1)=2k^2-3k-3,\\
\deg_qF_{2k}
 &=k+(k-3)=2k-3,
\end{align*}
and every label component is
$(k-2)+(k-3)=2k-5$.
This is precisely $d(2k)$.  Since the Segre determinant is supported on all labels, \eqref{eq:segre-pair} shows that it belongs to every ideal $(p_{ij},q_{ij})$.

We next check the non-vanishing of the Segre factor in $Q_{2k}$.  Split the labels into two sets $I_1,I_2$ of size $k$.  Choose $k$ distinct points $[\lambda_i]\in\P^1$ on each set, with nonzero representatives $\lambda_i\in L$.  The corresponding degree-$(k-1)$ Veronese evaluation matrices are invertible.  Since $k\geq4$, one may choose nonzero scalars $c_i$ on $I_1$ and $d_i$ on $I_2$ with
\[
 \sum_{i\in I_1}c_i\lambda_i=0,\qquad
 \sum_{i\in I_2}d_i\lambda_i=0.
\]

We check the previous statement. Since the points $[\lambda_i]\in\mathbb P^1$, $i\in I_1$, are
pairwise distinct, the $2\times k$ matrix having the $\lambda_i$ as
columns has rank $2$, and hence its kernel $K$ has dimension $k-2$.
Moreover, $K$ is not contained in any coordinate hyperplane
$\{c_i=0\}$. Indeed, fixing $i$ and choosing distinct
$j,\ell\neq i$, the vectors $\lambda_j,\lambda_\ell$ form a basis of
$\mathbb C^2$, so $\lambda_i=a\lambda_j+b\lambda_\ell$,
with $a,b\neq0$, giving a kernel vector whose $i$th coordinate is
nonzero. Thus each $K\cap\{c_i=0\}$ is a proper linear subspace of
$K$. Since $\mathbb C$ is infinite, a finite union of proper linear
subspaces cannot cover $K$. Hence there exists
$(c_i)_{i\in I_1}\in K$ with $c_i\neq0$ for every $i$, that is,
\[
\sum_{i\in I_1} c_i\lambda_i=0,
\qquad c_i\neq0\quad\text{for all }i\in I_1.
\]

Set $\widetilde\lambda_i=c_i(1,0)$ on $I_1$ and $\widetilde\lambda_i=d_i(0,1)$ on $I_2$.  Momentum conservation holds, while after grouping the columns according to their $u$- and $v$-factors the Segre matrix is block diagonal with two nonzero Veronese determinants.  Hence $\Delta_{k-1,1}$ is nonzero in $Q_{2k}$.  Since $Q_{2k}$ is a domain, each Pl\"ucker coordinate is a nonzero class, and $\Psi_r$ is a monomial in these variables, so $F_{2k}\neq0$.

Finally, both $\Delta_{k-1,1}$ and $\Psi_r$ change sign under $(1\,2)$, by \Cref{prop:segre-basic,lem:Psi}.  Their product is therefore invariant.  The Hodges numerator is alternating, so the nonzero $F_{2k}$ cannot be proportional to it.
\end{proof}

\section{Odd multiplicity at least nine}\label{sec:odd}

In this section we establish our main result, Theorem~\ref{thm:main}, for the odd case.

\medskip

Let $n=2k+1\geq9$ and set
$S=\{1,\ldots,2k-2\}, T=\{2k-1,2k,2k+1\}$.
Thus $|S|=n-3=2(k-1)$.  We use the Segre determinant
$\Delta_{k-2,1}(S)$,
whose spinor multidegree is
\begin{equation}\label{eq:odd-core-degree}
 \left((k-2)(k-1),k-1;(k-3)^{2k-2},0^3\right).
\end{equation}
Define the three-label completion
\begin{equation}\label{eq:Theta}
 \Theta_k=
 \left(\prod_{i\in S}\prod_{t\in T}p_{it}\right)
 \left(\prod_{\{u,v\}\subset T}q_{uv}\right).
\end{equation}
Its spinor multidegree is
\begin{equation}\label{eq:Theta-degree}
 \deg\Theta_k=\bigl(6k-6,3;3^{2k-2},(2k-4)^3\bigr).
\end{equation}
Indeed, each support label occurs in three crossing $p$-brackets, while each label of $T$ occurs in $2k-2$ crossing $p$-brackets and in two $q$-brackets.

Let $r=k-4\geq0$.
Since $|S|=2r+6$, the correction monomial $\Psi_r$ is naturally defined using only the labels in $S$.

\begin{theorem}[Odd family]\label{thm:odd}
For $n=2k+1\geq9$, define
\begin{equation}\label{eq:Fodd}
 F_{2k+1}=\Delta_{k-2,1}(S)\,\Psi_{k-4}(S)\,\Theta_k.
\end{equation}
Then
\[
 0\neq F_{2k+1}\in(J_{2k+1})_{d(2k+1)}.
\]
If $n\geq11$, then $(1\,2)F_{2k+1}=F_{2k+1}$, hence $F_{2k+1}$ is not proportional to the Hodges numerator.  The same conclusion holds for $n=9$.
\end{theorem}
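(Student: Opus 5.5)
The proof has four parts: the multidegree computation, the ideal memberships, nonvanishing in $Q_{2k+1}$, and non-proportionality to $N_{2k+1}$. Throughout, $r=k-4$.

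\emph{Multidegree.} I add \eqref{eq:odd-core-degree}, \eqref{eq:Psi-degree} for $\Psi_{k-4}(S)$ (placed on the labels of $S$), and \eqref{eq:Theta-degree}.
\begin{itemize}
\item The $p$-degree is $(k-2)(k-1)+(k-4)k+6k-6=2k^2-k-4$, which equals $\tfrac{n^2-3n-6}{2}$ for $n=2k+1$.
\item The $q$-degree is $(k-1)+(k-4)+3=2k-2=n-3$.
\item Each label in $S$ gets $(k-3)+(k-4)+3=2k-4=n-5$.
\item Each label in $T$ gets $0+0+(2k-4)=n-5$.
\end{itemize}
So $\deg F_{2k+1}=d(2k+1)$.

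\emph{Ideal memberships.} I split pairs $\{i,j\}$ into three types.
\begin{itemize}
\item If $i,j\in S$, then \eqref{eq:segre-pair} gives $\Delta_{k-2,1}(S)\in(p_{ij},q_{ij})$.
\item If $i\in S$ and $t\in T$, then $p_{it}$ divides $\Theta_k$.
\item If $u,v\in T$, then $q_{uv}$ divides $\Theta_k$.
\end{itemize}
Since each of these is a divisibility or membership in $R_n$, it passes to $Q_n$, and hence $F_{2k+1}\in J_{2k+1}$.

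\emph{Nonvanishing.} $Q_n$ is a domain, and every Plücker coordinate is a nonzero class, so the monomials $\Psi_r$ and $\Theta_k$ are nonzero. It therefore suffices to show $\Delta_{k-2,1}(S)\neq0$ in $Q_n$. I will imitate the even case.
\begin{itemize}
\item Split $S$ into $I_1,I_2$ of size $k-1\ge3$, and choose distinct points $[\lambda_i]$ on each half.
\item Choose $c_i,d_i$ all nonzero with $\sum_{I_1}c_i\lambda_i=\sum_{I_2}d_i\lambda_i=0$. This is possible because $k-1\ge3$ distinct points give a kernel not contained in any coordinate hyperplane; the argument in the proof of \Cref{thm:even} applies verbatim.
\item Set $\widetilde\lambda_i=c_i(1,0)$ on $I_1$ and $\widetilde\lambda_i=d_i(0,1)$ on $I_2$.
\item On $T$, take arbitrary $\lambda_t$ and $\widetilde\lambda_t=0$. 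Momentum conservation still holds, and $\Lambda,\widetilde\Lambda$ still have full rank.
\end{itemize}
Grouping the $u$- and $v$-columns makes the Segre matrix block diagonal with two invertible degree-$(k-2)$ Veronese blocks. Hence its determinant is nonzero. (That $\Theta_k$ vanishes at this particular point is irrelevant: we only use the point to show the factor $\Delta_{k-2,1}(S)$ is nonzero, then use the domain property.)

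\emph{Non-proportionality.} For $n\ge11$ we have $r\ge1$. The factors behave under $(1\,2)$ as follows:
\begin{itemize}
\item $\Delta_{k-2,1}(S)$ changes sign by \eqref{eq:segre-sign}, since $1,2\in S$;
\item $\Psi_r$ changes sign by \eqref{eq:Psi-sign};
\item $\Theta_k$ is invariant, since $\prod_{t}p_{1t}p_{2t}$ is swapped into itself and the $q_{uv}$ do not involve $1,2$.
\end{itemize}
So $F_{2k+1}$ is $(1\,2)$-invariant, and \Cref{rem:Hodges-alternating} applies.

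For $n=9$ we have $r=0$ and $\Psi_0=1$, so $F_9$ is alternating under $(1\,2)$ and the symmetry argument fails. This is the main obstacle. I would exploit a vanishing locus that $N_9$ does not share. $F_9$ is divisible by $p_{17}$, so it vanishes on every momentum-conserving point with $p_{17}=0$. It then suffices to exhibit one explicit point of $\mathrm{SH}(2,9,0)$ with $\lambda_7\parallel\lambda_1$ but all other brackets generic, where the Hodges numerator $N_9$ (computed from Hodges' reduced determinant) is nonzero. This witness would go in the appendix.

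If proportionality is written as $F_9=cN_9$, such a point forces $c=0$, contradicting $F_9\neq0$. Should $N_9$ happen to vanish on $\{p_{17}=0\}$ as well, I would instead evaluate both $F_9$ and $N_9$ at two generic witness points and check that the ratios differ.
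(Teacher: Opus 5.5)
Your degree count, ideal memberships and $(1\,2)$-argument for $n\geq11$ match the paper; the nonvanishing step and the $n=9$ step take different routes.

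\textbf{Nonvanishing.} To show $\Delta_{k-2,1}(S)\neq0$ in $Q_{2k+1}$, you reuse the even-case block-diagonal configuration on $S$ (valid, since $k-1\geq3$) and set $\widetilde\lambda_t=0$ on $T$. The paper instead starts from an arbitrary witness on $S$ that need not conserve momentum, absorbs $-P_S$ into two rank-one momenta on labels $2k-1,2k$, and puts $\lambda_{2k+1}=0$. Both are valid. Yours is more explicit; the paper's shows that any configuration on $S$ extends to a momentum-conserving one.

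\textbf{The case $n=9$.} You propose to compare $F_9$ with $N_9$ itself at a point of $\{p_{17}=0\}$. The logic is sound, but it is only a plan. No witness is produced. Evaluating $N_9$ there is also awkward: Hodges' reduced determinant has a pole along $p_{17}=0$, so $N_9$ must be read off from the cleared-denominator polynomial or taken as a limit. Your fallback with two generic points avoids the pole, but it still requires computing $N_9$.

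The paper never evaluates $N_9$; it uses only its alternation (\Cref{rem:Hodges-alternating}). Its witness has $\lambda_6=\lambda_8$, so the cross-bracket $p_{68}$ of $\Theta_4$ vanishes and $F_9=0$. In $(6\,7)F_9$, both $6$ and $8$ lie in the new three-label set, so $p_{68}$ is no longer a factor, and the value is nonzero. Hence $(6\,7)F_9\neq-F_9$.

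This is your own vanishing-locus idea, applied to $F_9$ against a transposed copy of itself rather than against $N_9$. It costs only evaluations of $F_9$. It also gives the slightly stronger conclusion that $F_9$ is not alternating, so it is not proportional to any alternating element.
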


\begin{proof}
The spinor multidegrees in \eqref{eq:odd-core-degree}, \eqref{eq:Psi-degree}, and \eqref{eq:Theta-degree} add as follows:
\begin{align*}
\deg_pF_{2k+1}
 &=(k-2)(k-1)+k(k-4)+(6k-6)=2k^2-k-4,\\
\deg_qF_{2k+1}
 &=(k-1)+(k-4)+3=2k-2.
\end{align*}
For $i\in S$, the label component is $(k-3)+(k-4)+3=2k-4$,
while for $t\in T$ it is already $2k-4$ from $\Theta_k$.  Thus $\deg F_{2k+1}=d(2k+1)$.

Pairs inside $S$ are covered by the Segre determinant.  Every pair joining $S$ to $T$ occurs as a $p$-factor in $\Theta_k$, and every pair inside $T$ occurs as a $q$-factor.  Hence $F_{2k+1}\in J_{2k+1}$.

We next prove that $F_{2k+1}$ is nonzero in $Q_{2k+1}$.
We first show that the Segre factor $\Delta_{k-2,1}(S)$
defines a nonzero class in $Q_{2k+1}$.
By \Cref{prop:segre-basic}, there exists a choice of the spinor pairs
$(\lambda_i,\widetilde\lambda_i), i\in S$,
for which $\Delta_{k-2,1}(S)\neq 0$.
These spinors need not satisfy momentum conservation.  Let $P_S:=\sum_{i\in S}\lambda_i\widetilde\lambda_i$
be their total momentum matrix.  This is an arbitrary $2\times2$ complex
matrix.  Since every $2\times2$ matrix is a sum of at most two rank-one
matrices, we may write
\[
-P_S=
\lambda_{2k-1}\widetilde\lambda_{2k-1}
+
\lambda_{2k}\widetilde\lambda_{2k}.
\]
If, say, $-P_S=
\begin{pmatrix}
a&b\\
c&d
\end{pmatrix},
$
one may take
\[
\lambda_{2k-1}=
\binom{a}{c},
\qquad
\widetilde\lambda_{2k-1}=
(1,0),
\qquad
\lambda_{2k}=
\binom{b}{d},
\qquad
\widetilde\lambda_{2k}=
(0,1).
\]
For the last label we may take, for instance, $\lambda_{2k+1}=0$,
and choose $\widetilde\lambda_{2k+1}$ arbitrarily.  Its contribution to
the momentum matrix is then zero.  Consequently,
$\sum_{i=1}^{2k+1}\lambda_i\widetilde\lambda_i
=
P_S+
\lambda_{2k-1}\widetilde\lambda_{2k-1}
+
\lambda_{2k}\widetilde\lambda_{2k}
=0$.
Thus the resulting $(2k+1)$-tuple satisfies the momentum-conservation
relations defining $Q_{2k+1}$.

Since $\Delta_{k-2,1}(S)$ depends only on the labels in $S$, its value at
this momentum-conserving configuration is the same nonzero value with
which we started.  Hence
$\Delta_{k-2,1}(S)\neq0$ in $Q_{2k+1}$.

Finally, $Q_{2k+1}$ is the homogeneous coordinate ring of the
spinor-helicity variety, equivalently of the two-step flag variety
described above, and in particular it is an integral domain.  Each
Pl\"ucker coordinate appearing in the monomials
$\Psi_{k-4}(S)$ and $\Theta_k$ is a nonzero class in $Q_{2k+1}$.
Therefore $\Psi_{k-4}(S)\Theta_k\neq0$ in
$Q_{2k+1}$, and hence $F_{2k+1}\neq0$.

If $k\geq5$, equivalently $n\geq11$, then $r=k-4\geq1$.  The Segre determinant and $\Psi_r$ are both anti-invariant under $(1\,2)$, while $\Theta_k$ is invariant.  Therefore $(1\,2)F_{2k+1}=F_{2k+1}$, so the element is not proportional to the Hodges numerator.

For $n=9$ one has $k=4$ and $\Psi_0=1$, so the preceding argument cannot be applied.  An exact momentum-conserving witness is recorded in \Cref{app:witnesses}; at that point  $F_9=0$ and  $((6\,7)F_9)\neq0$.
Hence $(6\,7)F_9$ is not equal to $-F_9$, and $F_9$ is not alternating.
\end{proof}

\section{The case \texorpdfstring{$n=7$}{n=7}}\label{sec:seven}

In this section we establish our main result, Theorem~\ref{thm:main}, for the case $n=7$, which plays a special role as it does not follow from the results established in the preceding sections.

\medskip

For $n=7$ the natural determinant is $\Delta_{2,1}(1,\ldots,6)$, whose degree is $(6,3;1^6,0)$.  The missing spinor multidegree is $(5,1;1^6,2)$.
For distinct $a,b\in\{1,\ldots,6\}$ define
\begin{equation}\label{eq:S7ab-new}
 \cS_7^{ab}
 =\sum_{k\in[6]\setminus\{a,b\}}
 q_{k7}p_{ak}p_{bk}
 \prod_{\ell\in[6]\setminus\{a,b,k\}}p_{\ell7}.
\end{equation}

\begin{lemma}\label{lem:S7-new}
The class of $\cS_7^{ab}$ in $Q_7$ is independent of $\{a,b\}$.  Denote the common class by $\cS_7$.  Moreover,
\[
 \cS_7\in(p_{i7},q_{i7})Q_7\qquad(1\leq i\leq6).
\]
\end{lemma}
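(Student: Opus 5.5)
The plan is to work in the fraction field of $Q_7$. Since $Q_7$ is an integral domain and each $p_{\ell7}$ is a nonzero class, it is legitimate to divide by
\[
P:=\prod_{\ell=1}^{6}p_{\ell7}.
\]
Writing each product $\prod_{\ell\in[6]\setminus\{a,b,k\}}p_{\ell7}$ as $P/(p_{a7}p_{b7}p_{k7})$, one gets
\[
 \cS_7^{ab}=\frac{P}{p_{a7}p_{b7}}\,f_{ab},\qquad
 f_{ab}:=\sum_{k\in[6]\setminus\{a,b\}}\frac{q_{k7}\,p_{ak}\,p_{bk}}{p_{k7}}.
\]
The terms with $k=a$ or $k=b$ vanish anyway, because $p_{aa}=p_{bb}=0$. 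So the sum may be taken over all $k\in[6]$ (equivalently over $k\in[7]$, since $q_{77}=0$), which removes the range bookkeeping.

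Since $\cS_7^{ab}$ is symmetric in $a,b$, it suffices to show $\cS_7^{ab}=\cS_7^{ac}$ for distinct $a,b,c\in[6]$; any two pairs are then connected by at most two such moves. The key input is the three-term Plücker relation
\[
 p_{bk}p_{c7}=p_{bc}p_{k7}+p_{b7}p_{ck},
\]
whose signs I will double-check. Dividing it by $p_{b7}p_{c7}p_{k7}$ gives the partial-fraction identity
\[
 \frac{p_{bk}}{p_{b7}p_{k7}}-\frac{p_{ck}}{p_{c7}p_{k7}}=\frac{p_{bc}}{p_{b7}p_{c7}}.
\]
Multiplying by $q_{k7}p_{ak}/p_{a7}$ and summing over $k$ yields
\[
 \frac{\cS_7^{ab}-\cS_7^{ac}}{P}
 =\frac{p_{bc}}{p_{a7}p_{b7}p_{c7}}\sum_{k=1}^{7}p_{ak}q_{k7}
 =\frac{p_{bc}}{p_{a7}p_{b7}p_{c7}}\,M_{a7}.
\]
Momentum conservation gives $M_{a7}=0$ in $Q_7$. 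Hence $\cS_7^{ab}=\cS_7^{ac}$ in the fraction field, and therefore in $Q_7$.

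To keep everything polynomial, I can alternatively multiply the Plücker relation by $q_{k7}p_{ak}\prod_{\ell\neq a,b,c,k}p_{\ell7}$ and sum over $k$. This gives the polynomial identity
\[
 \cS_7^{ab}-\cS_7^{ac}=p_{bc}\Bigl(\prod_{\ell\in[6]\setminus\{a,b,c\}}p_{\ell7}\Bigr)M_{a7}\quad\text{in }R_7,
\]
up to the boundary terms $k\in\{b,c\}$, which I must check match up. I expect this sign and boundary-term verification to be the only real obstacle.

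For the ideal membership, fix $i\in[6]$ and use the independence just proved to represent $\cS_7$ by $\cS_7^{ab}$ with $a,b\neq i$; this is possible since there are six labels. In that representative:
\begin{itemize}[leftmargin=2em]
\item every summand with $k\neq i$ contains $p_{i7}$ in its product $\prod_{\ell\neq a,b,k}p_{\ell7}$;
\item the summand with $k=i$ contains the factor $q_{i7}$.
\end{itemize}
Hence $\cS_7\in(p_{i7},q_{i7})Q_7$ for every $1\leq i\leq6$.
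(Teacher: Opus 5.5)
Your proof is correct and is essentially the paper's argument: the same three-term Plücker relation, summed over $k$ with the boundary terms $k=b,c$ absorbed, gives exactly the paper's identity $\cS_7^{ab}-\cS_7^{ac}=p_{bc}\prod_{\ell\in[6]\setminus\{a,b,c\}}p_{\ell7}\,M_{a7}$ (the boundary check you deferred does work out), and the ideal membership uses the same choice of representative with $i\notin\{a,b\}$. One small slip: you cannot extend $f_{ab}$ to $k=7$, because that term has the denominator $p_{77}=0$; you never need to, since $\sum_{k\in[6]}p_{ak}q_{k7}=M_{a7}$ already.
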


\begin{proof}
For distinct $a,b,c\in[6]$, the Pl\"ucker relation
\[
 p_{bk}p_{c7}-p_{ck}p_{b7}=p_{bc}p_{k7}
\]
gives, after collecting the two terms $k=b,c$,
\[
 \cS_7^{ab}-\cS_7^{ac}
 =p_{bc}\!\prod_{\ell\in[6]\setminus\{a,b,c\}}p_{\ell7}
 \sum_{r\in[6]\setminus\{a\}}p_{ar}q_{r7}.
\]
The last sum is the momentum relation $M_{a7}$, hence the difference is zero in $Q_7$.  This proves independence of $\{a,b\}$.

For the pair $(i,7)$ choose a representative $\cS_7^{ab}$ with $i\notin\{a,b\}$.  Its $k=i$ summand contains $q_{i7}$; every other summand contains $p_{i7}$.  Thus the common class lies in $(p_{i7},q_{i7})$.
\end{proof}

\begin{theorem}[Seven-point family]\label{thm:seven}
The polynomial
\begin{equation}\label{eq:F7-new}
 F_7=\Delta_{2,1}(1,\ldots,6)\,\cS_7
\end{equation}
is a nonzero non-Hodges element of $(J_7)_{d(7)}$.
\end{theorem}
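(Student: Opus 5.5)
The proof has three parts: degree and ideal membership, nonvanishing in $Q_7$, and linear independence from $N_7$.

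\textbf{Degree and membership.} First we compute the spinor multidegree of one summand of $\cS_7^{ab}$, say the term indexed by $k$. It contains $q_{k7}$, $p_{ak}$, $p_{bk}$ and three brackets $p_{\ell7}$. This gives total $p$-degree $5$ and $q$-degree $1$. The label components are as follows:
\begin{itemize}[label={}]
\item $a$ and $b$ each get $1$;
\item $k$ gets $2-1=1$;
\item each $\ell$ gets $1$;
\item label $7$ gets $3-1=2$.
\end{itemize}
So the degree is $(5,1;1^6,2)$, and adding $\deg\Delta_{2,1}=(6,3;1^6,0)$ yields $(11,4;2^7)=d(7)$.

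For membership, pairs $i<j\le 6$ are handled by \eqref{eq:segre-pair} applied to $\Delta_{2,1}(1,\ldots,6)$. Pairs $(i,7)$ are handled by \Cref{lem:S7-new}. Both ideals persist after multiplying by the other factor, so $F_7\in(J_7)_{d(7)}$.

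\textbf{Nonvanishing.} Since $Q_7$ is a domain, it suffices to show that each factor is nonzero. For $\Delta_{2,1}$, I would argue as in the odd case. Take any six spinor pairs with $\Delta_{2,1}\ne0$, for instance the Vandermonde configuration of \Cref{prop:segre-basic}. Their total momentum $P_S$ should then have rank at most one, so that a single seventh particle can cancel it. Concretely, take the six-point momentum-conserving witness of \Cref{rem:five-six} and set $\lambda_7=0$; this is a valid point of $\mathrm{SH}(2,7,0)$ only if full rank is not required. To avoid that issue, I would instead perturb: the condition $\Delta_{2,1}\neq 0$ is open, so a generic point of the irreducible variety near this configuration also works.

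For $\cS_7$, evaluate a representative $\cS_7^{ab}$ at an explicit seven-point momentum-conserving configuration, for example one with rational entries. This is recorded in the appendix, in the same spirit as the witnesses for $n=9$.

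\textbf{Non-Hodges.} The natural symmetry to use is a transposition $(a\,b)$ with $a,b\le6$. It changes the sign of $\Delta_{2,1}$, and it acts on $\cS_7^{ab}$ by $p_{ak}p_{bk}\mapsto p_{bk}p_{ak}$, leaving the other factors alone. Hence $(a\,b)\cS_7^{ab}=\cS_7^{ab}$, and since the class $\cS_7$ does not depend on $\{a,b\}$, $\cS_7$ is invariant under every transposition inside $[6]$. Therefore $(1\,2)F_7=-F_7$, which is consistent with alternation and does not rule out proportionality to $N_7$.

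Instead I would test a transposition $(i\,7)$, or argue directly that $F_7$ is not alternating. The plan is to exhibit a momentum-conserving witness point at which $F_7$ and $(6\,7)F_7$ fail to satisfy $(6\,7)F_7=-F_7$, for example with $F_7=0$ but $(6\,7)F_7\neq0$, exactly as for $n=9$. A natural way to get $F_7=0$ is a point where $\Delta_{2,1}(1,\ldots,6)$ vanishes, namely where six points lie on a common $(2,1)$-curve, while the Segre determinant on labels $\{1,\ldots,5,7\}$ does not vanish and $(6\,7)\cS_7\ne0$. Finding an exact rational point satisfying momentum conservation together with these conditions is the main computational obstacle. I would search within the block-diagonal family used in \Cref{rem:five-six}, adjusted to include a seventh particle.
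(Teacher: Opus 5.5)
Your outline follows the paper's proof: the same degree count $(6,3;1^6,0)+(5,1;1^6,2)=d(7)$, membership via \eqref{eq:segre-pair} and \Cref{lem:S7-new}, and non-Hodges detected by a transposition involving label $7$ at a momentum-conserving point. Your observation that $\cS_7$ is $\mathfrak S_6$-invariant in $Q_7$ is correct. Hence $F_7$ is $\mathfrak S_6$-alternating exactly like $N_7$, which explains why only something like $(6\,7)$ can work.

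\textbf{The gap.} You never produce the witness. The non-Hodges claim rests entirely on showing that $F_7+(6\,7)F_7\neq0$ in $Q_7$. Your argument gives no reason this holds; you only plan a search. The paper supplies the point in \Cref{app:witnesses}:
\[
\Lambda=\begin{bmatrix}1&1&1&1&1&1&1\\1&-2&-3&-1&2&3&0\end{bmatrix},\qquad
\widetilde\Lambda=\begin{bmatrix}-2&5&-4&-2&2&-2&3\\2&10&-4&-6&-6&4&0\end{bmatrix}.
\]
Here $\Lambda\widetilde\Lambda^T=0$. At this point $\Delta_{2,1}(1,\ldots,6)=-622\,080$ and $\cS_7=-288$, so $F_7=179\,159\,040$. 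After swapping labels $6,7$ one gets $(6\,7)F_7=821\,145\,600\neq-F_7$.

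You do not need to aim for $F_7=0$. That adds the constraint of landing rationally on the hypersurface $\Delta_{2,1}(1,\ldots,6)=0$. Any point with $F_7\neq-(6\,7)F_7$ suffices. Moreover, a single point with $F_7\neq0$ gives nonvanishing directly, which makes your factor-by-factor domain argument unnecessary.

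\textbf{Minor inaccuracies in the nonvanishing step.} The assertion that an arbitrary six-point configuration with $\Delta_{2,1}\neq0$ has total momentum $P_S$ of rank at most one is false; in general $P_S$ is an arbitrary $2\times2$ matrix. Your fallback does work: take the six-point witness of \Cref{rem:five-six} and set $\lambda_7=0$. The full-rank worry is moot, for two reasons. First, columns $1,\ldots,6$ already give $\Lambda$ and $\widetilde\Lambda$ rank $2$. Second, evaluation $Q_7\to\C$ only requires $\Lambda\widetilde\Lambda^T=0$, since the Plücker relations hold for minors of any $2\times7$ matrices. So the perturbation argument is not needed.
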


\begin{proof}
The degree is
\[
 (6,3;1^6,0)+(5,1;1^6,2)=(11,4;2^7)=d(7).
\]
Pairs inside $\{1,\ldots,6\}$ are covered by the Segre determinant, and the six remaining ideals $(p_{i7},q_{i7})$ are covered by \Cref{lem:S7-new}.  Hence $F_7\in(J_7)_{d(7)}$.

At the exact momentum-conserving point in \Cref{app:witnesses},
$F_7\neq0$ and $(6\,7)F_7\neq-F_7$.  The Hodges numerator is alternating, so $F_7$ is not proportional to it.
\end{proof}

\section{Conclusion}\label{sec:conclusion}

\Cref{thm:even,thm:odd,thm:seven} prove \Cref{thm:main}.  The construction shows that Segre--Veronese geometry provides a direct source of elements satisfying the MHV interpolation conditions.

Mai and Zhang's work \cite{MaiZhang} first established the failure of algebraic uniqueness at seven and eight points and analyzed the Bose-compatible sectors. The present note is in some sense complementary: we find an explicit Segre--Veronese construction producing algebraic solutions independent of the Hodges numerator at every multiplicity $n\geq7$.

\appendix
\section{Exact witnesses for \texorpdfstring{$n=7$ and $n=9$}{n=7 and n=9}}\label{app:witnesses}

\subsection{Seven points}

Let
\[
\Lambda=
\begin{bmatrix}
1&1&1&1&1&1&1\\
1&-2&-3&-1&2&3&0
\end{bmatrix}
\]
and
\[
\widetilde\Lambda=
\begin{bmatrix}
-2&5&-4&-2&2&-2&3\\
2&10&-4&-6&-6&4&0
\end{bmatrix}.
\]
Then $\Lambda\widetilde \Lambda^T=\mathbf 0$ and so momentum is conserved. A direct computation shows that $\Delta_{2,1}=-622\,080,
 \cS_7=-288,
 F_7=179\,159\,040$
and $(6\,7)F_7=821\,145\,600$.

\subsection{Nine points}

Let
\[
\Lambda=
\begin{bmatrix}
    0&4&1&-1&2&3&-2&3&2\\
    4&3&-2&-3&-1&0&1&0&1
\end{bmatrix}
\]
and
\[
\widetilde\Lambda=
\begin{bmatrix}
    4&1&-4&0&-3&-2&3&28&-33\\
    -1&-3&-1&-1&1&0&-4&-8&13
\end{bmatrix}.
\]
Then $\Lambda\widetilde \Lambda^T=\mathbf 0$ and so momentum is conserved.
For the nine-point polynomial \eqref{eq:Fodd}, a direct computation shows $F_9=0$ and $(6\,7)F_9\neq0$.
This is the witness used in \Cref{thm:odd}.

\printbibliography
\end{document}